\newtheorem{theorem}{Theorem} 
\newtheorem{lemma}[theorem]{Lemma}     
\newtheorem{definition}[theorem]{Definition}
\newtheorem{assumption}{Assumption}
\title{\LARGE \bf
Distributed Time-Varying Gaussian Regression via Kalman Filtering
}
\author{
    Nicola Taddei${}^\ast$  \and
     Riccardo Maggioni${}^\ast$ \and Jaap Eising \and Giulia De Pasquale \and Florian D\"orfler \thanks{* Authors contributed equally. Nicola Taddei, Riccardo Maggioni, Jaap Eising, Florian D\"orfler are with the Automatic Control Laboratory, Department of Electrical Engineering and Information Technology, ETH Zurich, Physikstrasse
3 8092 Zurich, Switzerland. (e-mail:\{taddein, rmaggioni, jeising, dorfler\}@ethz.ch).
Giulia De Pasquale is with with the Control Systems Group, Department of Electrical Engineering, TU Eindhoven, Flux
Groene Loper 19
5612 AP Eindhoven, The 
Netherlands. (e-mail: g.de.pasquale@tue.nl). This research is supported by the Swiss National Science Foundation under NCCR Automation.}
}
\begin{document}

\maketitle

\thanks{This paper has been accepted for presentation at the 2025 European Control Conference (ECC). The final version will appear in the ECC Proceedings. © EUCA. Personal use of this material is permitted. Redistribution or reuse must comply with EUCA policies.}

\begin{abstract}
We consider the problem of learning time-varying functions in a distributed fashion, where agents collect local information to collaboratively achieve a shared estimate.
This task is particularly relevant in control applications, whenever real-time and robust estimation of dynamic cost/reward functions in safety critical settings has to be performed. In this paper, we adopt a finite-dimensional approximation of a Gaussian Process, corresponding to a Bayesian linear regression in an appropriate feature space, and propose a new algorithm, DistKP, to track the time-varying coefficients via a distributed Kalman filter. The proposed method works for arbitrary kernels and under weaker assumptions on the time-evolution of the function to learn compared to the literature.
We validate our results using a simulation example in which a fleet of Unmanned Aerial Vehicles (UAVs) learns a dynamically changing wind field.
\end{abstract}

\section{INTRODUCTION}

\emph{Problem Description and Motivation.}
Learning and exploiting spatio-temporal functions is key in areas like traffic \cite{10.1145/2030112.2030126}, temperature control \cite{Wang2019ModelingOA}, and optimization \cite{9636520}, where such functions describe phenomena like urban traffic flow, regional temperature, or cost and reward structures. In all of these circumstances, accurate estimates are essential for safe and efficient operation.

In large, dynamic, and safety-critical environments, relying on networked agents is often advantageous, as they can cover wide areas and gather localized data more efficiently than a single unit \cite{Prieto_2024}. The function estimation process in such systems can follow either a centralized or a distributed approach~\cite{10.1007/978-94-017-2376-3_2}. Although centralized learning is well-established \cite{Goodfellow-et-al-2016}, \cite{buskirk2018introduction}, a distributed approach is often preferred in applications where efficiency and robustness are essential.
In such approaches, as long as some agents remain active, the network continues to operate, even in the face of partial failures.
They also enhance privacy since they don't rely on centralized servers, thus reducing the risk of large-scale data breaches. 
For example, consider a road condition estimator used in assisted driving systems \cite{BoschRoadCondition}: in a centralized architecture, a breach of the server could expose the real-time locations of the entire fleet of vehicles, while a fully distributed solution eliminates this risk by storing position data locally.

\emph{Literature Review.}
Gaussian Processes (GPs) \cite{gp_blr} are commonly used for function estimation \cite{li2024safe,pmlr-v37-sui15}, where they outperform the other techniques due to their data-efficiency, interpretability and their ability to model uncertainty, which makes them particularly suited for safety critical settings \cite{cautious_mpc_gp,safe_mpc,li2024safe}. Although very effective, Gaussian Processess are limited in applicability by their inference time that scales cubically with the number of samples.

While several methods for efficient inference are available for the stationary case \cite{gp_blr},\cite{karhunen_loeve}, the time-varying setting remains challenging because of the incremental nature of the dataset \cite{li2024safe}. Arguably some of the best methods that deal with non-stationary functions leverage Sparse GPs \cite{unifying_sparse_gp}, which rely on the use of a subset of the dataset, and Spatio-temporal GPs \cite{spatio_temporal_gp, li2024safe} that, under assumptions on the time-separability of the kernel, rewrite the Gaussian Process as a state-space model and use filtering to perform inference. The authors in \cite{spatio_temporal_variational_gps} use these techniques in combination with variational inference to further increase efficiency.

While these methods are designed for centralized data collection and inference, some works address the problem of learning distributed Gaussian Processes. Particularly interesting for their design simplicity are methods that leverage finite dimensional Gaussian Process approximations and average consensus algorithms \cite{distributed_gp_for_env_model} to cooperatively learn a common estimate. The paper \cite{distributed_informative_planning}, while mainly focusing on stationary functions, introduces a variant to \cite{distributed_gp_for_env_model} that allows time-varying regression by exponentially damping old information. However, since information is discarded irrespectively of the system's configuration, this method leads to complete local memory loss when recent measurements are not available for a certain area, thus forcing the agents to continuously cover the whole domain in order to avoid it. In practice, this means that more agents have to be deployed, making the method less appealing.

\emph{Contribution}. 
We present a new algorithm, DistKP, that enables distributed learning of time-varying functions. This method
\begin{itemize}
    \item overcomes limitations of previous methods in the context of learning-based control, requiring weaker assumptions on the temporal dynamics;
    \item has only two, easily interpretable, additional parameters compared to standard Gaussian Processes;
    \item allows to arbitrarily trade precision with efficiency.
\end{itemize}
We validate the effectiveness of the  algorithm through simulations, by learning the wind distribution in a cumulus cloud model. 

\emph{Paper Organization}.
The remainder of this paper is organized as follows:
Section \ref{sec: Problem formulation} introduces the problem formulation. In Section \ref{sec: gp} we present an overview of Gaussian Processes, while in Section \ref{sec: technical adaptations} we elaborate some technical adaptations to our specific problem. In Section \ref{sec: problem solution}, we detail DistKP. Finally, Section \ref{sec: numerical results} introduces a test case, where we consider a swarm of  UAVs tasked to learn the wind distribution, represented by a function of the spatial coordinates.

\emph{Notation}.
We denote by $\mathbb{R}$ the set of real numbers.
  For a vector $\mathbf{v}\in\mathbb{R}^n$, $\|\mathbf{v}\|$ denotes the Euclidean norm of $\mathbf{v}$. The identity matrix in $\mathbb{R}^{n\times n}$ is denoted by $I_n$, and we omit the subscript if clear from context. A scalar $x$ (vector $\mathbf{x}$) sampled by a  Gaussian distribution with mean $\mu$ (${\boldsymbol \mu}$) and variance $\sigma^2$ (covariance $\Sigma$) is denoted as $x \sim \mathcal{N}(\mu,\sigma^2)$ ($\mathbf{x} \sim \mathcal{N}(\boldsymbol{\mu}, \Sigma)$).   

\section{Problem Formulation}
\label{sec: Problem formulation}

In this work, we consider the problem of learning a spatial-time varying function, using $n$ agents. To be precise, we consider learning a function $f(\cdot,t):  \mathcal{P} \to \mathbb{R}$, where $\mathcal{P} \subset \mathbb{R}^s$ is a bounded subset of $\mathbb{R}^s$. The learned function can be utilized by a controller to optimize the agents' actions. 

Each of the $n$ agents has a state $\mathbf{x}_i\in \mathbb{R}^s$, $i\in \{1,\dots, n\}$ which  follows single integrator dynamics 
\[
    \mathbf{x}_i(t+1) = \mathbf{x}_i(t) + \mathbf{u}_i(t) , \text{  } \| \mathbf{u}_i \| \leq u_{\text{max}}.
\]
In other words, at each time step $t$ the agent $i$ chooses a displacement vector $\mathbf{u}_i(t)$ such that $\| \mathbf{u}_i(t) \| \leq u_{\text{max}}$.  

The agents can communicate with each other within a certain distance $d$. 
To implement this behavior, we define the set of neighbors for agent $i$ as 
\[
\mathcal{N}_i(t) = \{ j \mid \| \mathbf{x}_i(t) - \mathbf{x}_j(t) \| \leq d, \,\,\, j \in \mathcal{N} \setminus \{i\} \},
\]
where $\mathcal{N} = \{ 1, 2, \ldots, n \}$ is the index set of agents and $d$ is the communication range. Thus, the resulting graph is undirected.
In the rest of this manuscript, we will operate under the following assumption.
\begin{assumption}
    The communication network is strongly connected for each time instant.
\end{assumption}
This assumption can be ensured by adding constraints on the distance between agents in the distributed controller. 

At each time step $t$ each agent gets a noisy observation of the learning function $y_i(t) = f(\mathbf{x}_i(t),t) + \nu_i(t)$, where $\nu_i(t)$ denotes zero mean\footnote{Non-zero mean noise can be easily handled by subtracting its mean value from the measurements.} i.i.d. Gaussian noise $\nu_i(t) \sim \mathcal{N}(0, \sigma_\nu^2)$. 

We will model $f$ using a Gaussian Process (GP), since it explicitly models uncertainty and efficiently uses data, thus making Gaussian Processes an ideal solution for embedded systems with low computational budget and safety critical applications. 

\section{Preliminaries on Gaussian Processes}
\label{sec: gp}
We will introduce appropriate modifications to standard Gaussian Processes to allow a distributed implementation and enable their use on time-varying functions.


Gaussian Processes are a powerful, non-parametric approach to regression and classification. We assume that any finite set of observations can be modeled as a multivariate Gaussian distribution, that is, we define a distribution over functions that is fully specified by its mean and covariance.\newline
Consider a function $m: \mathbb{R}^s \rightarrow \mathbb{R}$ called the \emph{mean function} and a function $k: \mathbb{R}^s \times \mathbb{R}^s \rightarrow \mathbb{R}$, positive-definite and symmetric, that models the covariance between measurements taken at two points of the domain. The latter function is referred to as \emph{kernel function} \cite{kernels}. For convenience, in the following we will assume the mean function to be zero, but our results can be easily adapted to the non-zero mean case. The covariance matrix relative to two sets of points $\mathcal{X} = \{ \mathbf{x}_i \}_{i=1}^N$, $\mathcal{Z} = \{ \mathbf{z_j} \}_{j=1}^M$ is defined as 
\[
    K_{\mathcal{X} \mathcal{Z}} = \begin{bmatrix} k(\mathbf{x}_1,\mathbf{z}_1) & \dots & k(\mathbf{x}_1,\mathbf{z}_M) \\ \vdots & \ddots & \vdots \\ k(\mathbf{x}_N,\mathbf{z}_1) & \dots & k(\mathbf{x}_N,\mathbf{z}_M) \end{bmatrix}.
\]
When $\mathcal{Z} = \{ \mathbf{z}_1 \}$, with a slight abuse of notation, we refer to the covariance matrix as $K_{\mathcal{X} \mathbf{z}_1}$.

Given a vector of measurements $\mathbf{y} = [y_1, \, \dots \, y_N]^\top$ corresponding to evaluations of a time-invariant function $g: \mathcal{P}\rightarrow \mathbb{R}$ at the points of the set $\mathcal{X}$ and a point of the domain $\mathbf{x}\in \mathcal{P}$, where we want to make inference, the posterior probability distribution of $g(\mathbf{x})$ is 
   $ g(\mathbf{x}) \sim \mathcal{N}(\mu_{\mathbf{x}}, \sigma_{\mathbf{x}})$,where
\begin{subequations}\label{eq:GP_inference}
\begin{equation}
    \label{eq:GP_mean_inference}
    \mu_{\mathbf{x}} = m(\mathbf{x}) + K_{\mathbf{x} \mathcal{X}} [K_{\mathcal{X} \mathcal{X}} + \sigma_\nu^2 I]^{-1} \boldsymbol{y},
\end{equation}
\begin{equation}
    \label{eq:GP_var_inference}
    \sigma_{\mathbf{x}} = K_{\mathbf{x} \mathbf{x}} - K_{\mathbf{x} \mathcal{X}} [K_{\mathcal{X} \mathcal{X}} + \sigma_\nu^2 I]^{-1} K_{\mathcal{X} \mathbf{x}}.
\end{equation}
\end{subequations}

Given any two domain points $\mathbf{x}, \mathbf{x}' \in \mathbb{R}^s$, some kernels can be rewritten as $k(\mathbf{x}, \mathbf{x}') = \Phi(\mathbf{x})^\top \Phi(\mathbf{x}')$, where $\Phi: \mathbb{R}^s \to \mathbb{R}^E$ is called \emph{feature map} and $\Phi(\mathbf{x})$ is referred to as the \emph{feature vector} of $\mathbf{x}$ \cite{gp_blr}. We use $\Phi(\mathcal{X})$ to refer to the matrix whose columns are the features of the points of $\mathcal{X}$. In this case the functions described by the Gaussian Process are $g(\mathbf{x}) = \boldsymbol{\theta}^\top \Phi(\mathbf{x})$, where $\boldsymbol{\theta}~\in~\mathbb{R}^E$ is a vector of parameters. In this case, it is known that Gaussian Process regression is equivalent to Bayesian Linear Regression (BLR), as mentioned by the following lemma.

\begin{lemma}[Equivalence of GP and BLR  \cite{GP_in_ML}]
\label{thm: GP to BLR}
    Consider a GP with kernel $k(\cdot, \cdot)$ such that $k(\mathbf{x}, \mathbf{x}') = \Phi(\mathbf{x})^\top \Phi(\mathbf{x}')$. Then the GP regression is equivalent to a Bayesian Linear Regression on features given by $\Phi(\cdot)$.
\end{lemma}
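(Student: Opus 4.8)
The plan is to exhibit the Bayesian Linear Regression (BLR) posterior explicitly and then show, via the matrix inversion lemma, that its predictive mean and variance coincide term-by-term with the GP inference formulas~\eqref{eq:GP_inference}.

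First I would fix the BLR model consistent with the GP prior. Since the mean function is taken to be zero and the kernel factorizes as $k(\mathbf{x},\mathbf{x}')=\Phi(\mathbf{x})^\top\Phi(\mathbf{x}')$, the matching weight-space prior is $\boldsymbol{\theta}\sim\mathcal{N}(0,I)$ with observation model $y_i=\boldsymbol{\theta}^\top\Phi(\mathbf{x}_i)+\nu_i$ and $\nu_i\sim\mathcal{N}(0,\sigma_\nu^2)$. Writing $\Phi=\Phi(\mathcal{X})$ for the $E\times N$ feature matrix, this prior reproduces exactly the GP prior, because the prior covariance of $g$ evaluated on $\mathcal{X}$ is $\Phi^\top\mathbb{E}[\boldsymbol{\theta}\boldsymbol{\theta}^\top]\Phi=\Phi^\top\Phi=K_{\mathcal{X}\mathcal{X}}$. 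I would then invoke the standard Gaussian conditioning result to obtain the posterior $\boldsymbol{\theta}\mid\mathbf{y}\sim\mathcal{N}(\boldsymbol{\mu}_\theta,\Sigma_\theta)$ with $\Sigma_\theta=(I+\sigma_\nu^{-2}\Phi\Phi^\top)^{-1}$ and $\boldsymbol{\mu}_\theta=\sigma_\nu^{-2}\Sigma_\theta\Phi\mathbf{y}$, so that the predictive distribution of $g(\mathbf{x})=\boldsymbol{\theta}^\top\Phi(\mathbf{x})$ has mean $\Phi(\mathbf{x})^\top\boldsymbol{\mu}_\theta$ and variance $\Phi(\mathbf{x})^\top\Sigma_\theta\Phi(\mathbf{x})$.

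The core of the argument is to reconcile these $E$-dimensional expressions with the $N$-dimensional kernel formulas. For the mean, I would apply the push-through identity $(I+\sigma_\nu^{-2}\Phi\Phi^\top)^{-1}\Phi=\Phi(I+\sigma_\nu^{-2}\Phi^\top\Phi)^{-1}$ and absorb the scalar $\sigma_\nu^{-2}$, which yields $\Phi(\mathbf{x})^\top\Phi(\Phi^\top\Phi+\sigma_\nu^2 I)^{-1}\mathbf{y}$; this is precisely \eqref{eq:GP_mean_inference} after the substitutions $K_{\mathbf{x}\mathcal{X}}=\Phi(\mathbf{x})^\top\Phi$ and $K_{\mathcal{X}\mathcal{X}}=\Phi^\top\Phi$. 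For the variance, I would apply the Woodbury identity to expand $\Sigma_\theta=I-\Phi(\sigma_\nu^2 I+\Phi^\top\Phi)^{-1}\Phi^\top$, sandwich it between $\Phi(\mathbf{x})^\top$ and $\Phi(\mathbf{x})$, and recognize the two resulting terms as $K_{\mathbf{x}\mathbf{x}}$ and $K_{\mathbf{x}\mathcal{X}}(K_{\mathcal{X}\mathcal{X}}+\sigma_\nu^2 I)^{-1}K_{\mathcal{X}\mathbf{x}}$, recovering \eqref{eq:GP_var_inference}.

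I expect the main obstacle to be purely the bookkeeping in these identities, in particular keeping the $\sigma_\nu^{-2}$ factors consistent and applying the push-through and Woodbury identities to the correct non-square dimensions, since $\Phi$ maps into $\mathbb{R}^E$ with no assumed relation between $E$ and $N$. Crucially, both identities hold regardless of whether $\Phi\Phi^\top$ or $\Phi^\top\Phi$ is the larger matrix, so no invertibility of $\Phi$ itself is required; the only genuine requirement is that $K_{\mathcal{X}\mathcal{X}}+\sigma_\nu^2 I$ be invertible, which is guaranteed by $\sigma_\nu^2>0$ together with the positive semidefiniteness of $K_{\mathcal{X}\mathcal{X}}=\Phi^\top\Phi$.
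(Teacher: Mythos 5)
Your proof is correct. The paper gives no proof of this lemma at all---it is cited directly to \cite{GP_in_ML}---and your derivation is exactly the standard weight-space argument from that reference: Gaussian conditioning on the prior $\boldsymbol{\theta}\sim\mathcal{N}(0,I)$, followed by the push-through and Woodbury identities to convert the $E$-dimensional posterior into the $N$-dimensional kernel formulas \eqref{eq:GP_inference}; it is also the same push-through mechanism the paper itself employs in the proof of Lemma~\ref{lemma:Nystrom GP}, only applied in the reverse direction (weight space to kernel space rather than kernel space to weight space), so your route is fully consistent with the paper's treatment.
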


When performing Gaussian Process regression the main design choice is that of the kernel. This choice has to be made in accordance to the assumed properties of the unknown function. Some common choices are \emph{Radial Basis Function} (RBF) kernels and \emph{Laplace} kernels. 
\begin{definition}[Radial Basis Function kernel]
   Given a Gaussian Process with covariance function $k:\mathbb{R}^s\times\mathbb{R}^s\to \mathbb{R}$, the Radial Basis Function kernel is defined as
\[
    k_{\rm RBF}(\mathbf{x}, \mathbf{x}') := \exp \left( -\frac{1}{2 \ell^2} \| \mathbf{x} - \mathbf{x}' \|^2 \right),
\]
where \( \ell \) is the length scale hyperparameter. 
\end{definition}
Using a RBF kernel enforces that the estimated function $g \in \mathcal{C}^\infty$.

\begin{definition}[Laplace kernel]
    Given a GP with covariance function $k:\mathbb{R}^s\times\mathbb{R}^s\to \mathbb{R}$, the Laplace kernel is defined as
    \[
    k_{\rm L}(\mathbf{x}, \mathbf{x}') = \exp \left( -\frac{\| \mathbf{x} - \mathbf{x}' \|}{\ell} \right).
    \]
\end{definition}
In contrast to the RBF kernel, the Laplace kernel can be used to learn $\mathcal{C}^0$ functions. Other choices are used when functions are assumed to be periodic, or otherwise structured. 

\section{Technical adaptations}
\label{sec: technical adaptations}
Recall that, with respect to standard Gaussian Processes, our problem requires dealing with 3 additional difficulties:
\begin{enumerate}
    \renewcommand{\theenumi}{\Alph{enumi}}
    \item  A large, and steadily growing, amount of data needs to be processed.
    \item In a time-varying setting the estimate is required to adapt to the changing function and the algorithm needs to be faster then the function dynamics.
    \item Since we deal with a multi-agent system and our main concern is robustness and privacy, we adopt a distributed implementation.
\end{enumerate}
In the following, we address these challenges in order.

\subsection{Nyström approximation}
\label{sec: Nystrom approximation}

When the number of measurements $N$ is large, Gaussian Process inference \eqref{eq:GP_inference} becomes computationally cumbersome due to the computation of the matrix inverse: $\left(K_{\mathcal{X} \mathcal{X}} + \sigma_\nu^2 I\right)^{-1}$, that requires $\mathcal{O}(N^3)$ operations. 

Given that we need to learn a function that changes over time in an online setting, with possibly limited computational resources, efficiency is a key requirement. We adopt a finite-order GP approximation that allows us to arbitrarily trade precision with efficiency through the choice of the hyperparameter $E$ (the order of the approximation). One option to achieve this is to use a truncated Karhunen–Loève (KL) expansion {\cite{karhunen_loeve}}. However, this restricts the applicable kernels to the ones for which a close-form expression for the eigenfunctions exists, i.e. RBF but not Laplace. A kernel agnostic alternative is the Nyström method \cite{nystrom} that enables efficient inference by approximating large covariance matrices under a low-rank assumption.

Recall the standard Gaussian Process regression of Section~\ref{sec: gp}. In addition to $\mathcal{X}$, consider a set of $E$ \emph{representative points} $\mathcal{R}$ such that $E \ll N$ and the complete set of points $\mathcal{D} = \mathcal{R} \cup \mathcal{X}$. In practice $\mathcal{R}$ can be either a subset of the training set or a completely new set of points, for which we don't require any measurement. The eigen-decomposition of $K_{\mathcal{D} \mathcal{D}}$ is then
\[
    K_{\mathcal{D} \mathcal{D}} = \begin{bmatrix} K_{\mathcal{R} \mathcal{R}} & K_{\mathcal{R} \mathcal{X}} \\ K_{\mathcal{X} \mathcal{R}} & K_{\mathcal{X} \mathcal{X}} \end{bmatrix} = U \begin{bmatrix} \Lambda_{\mathcal{R}} & \boldsymbol{0} \\ \boldsymbol{0} & \Lambda_{\mathcal{X}} \end{bmatrix} U^\top,
\]
\[
    U = \begin{bmatrix} U_{11} & U_{12} \\ U_{21} & U_{22} \end{bmatrix},
\]
where $U \in \mathbb{R}^{(E+N) \times (E+N)}$ is an orthonormal matrix and $\Lambda_{\mathcal{R}} \in \mathbb{R}^{E \times E}$, $\Lambda_{\mathcal{X}} \in \mathbb{R}^{N \times N}$ are diagonal matrices.

The Nyström method allows us to efficiently compute the inverse of the covariance matrix $K_{\mathcal{D} \mathcal{D}}$ under the following assumption.
\begin{assumption}[Low-rank of the covariance matrix]
\label{ass:negl_eigenv}
The entries of the matrix $\Lambda_{\mathcal{X}}$ are negligible. Namely,
\[
    K_{\mathcal{D} \mathcal{D}} \approx \begin{bmatrix} U_{11} \\ U_{21} \end{bmatrix} \Lambda_{\mathcal{R}} \begin{bmatrix} U_{11}^\top & U_{21}^\top \end{bmatrix}.
\]
\end{assumption}

Intuitively, this means that the covariance matrix contains redundant information that can be compressed.

\begin{theorem}[Nyström approximation \cite{nystrom}]
\label{theorem:nystrom}
Let Assumption \ref{ass:negl_eigenv}  hold. The covariance of the set $\mathcal{X}$ can be expressed as
\begin{equation}
    \label{eq: Nystrom kernel}
    K_{\mathcal{X} \mathcal{X}} = K_{\mathcal{X} \mathcal{R}} \,\, U_{11} \,\, \Lambda_{\mathcal{R}}^{-1} \,\, U_{11}^\top \,\, K_{\mathcal{R} \mathcal{X}},
\end{equation}
and the matrix expression $K_{\mathbf{x}\mathcal{X}}(K_{\mathcal{X} \mathcal{X}} + \sigma_\nu^2 I)^{-1}$ in \eqref{eq:GP_inference} as
\begin{align*}
    &K_{\mathbf{x}\mathcal{X}} \left( K_{\mathcal{X} \mathcal{X}} + \sigma_\nu^2 \,\, I\right)^{-1} = \\
    &= K_{\mathbf{x}\mathcal{R}} U_{11} \left( U_{11}^\top K_{\mathcal{R} \mathcal{X}} K_{\mathcal{X} \mathcal{R}} U_{11} + \sigma_\nu^2 \,\, \Lambda_{\mathcal{R}} \right)^{-1} U_{11}^\top K_{\mathcal{R} \mathcal{X}}.
\end{align*}
\end{theorem}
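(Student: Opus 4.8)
The plan is to prove the two claims in sequence: the low-rank factorization \eqref{eq: Nystrom kernel} follows directly from Assumption~\ref{ass:negl_eigenv} by block matching, and the reduced inference formula then follows by pushing the $N\times N$ inverse down to an $E\times E$ inverse through the matrix inversion lemma.

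For the first claim, I would expand the rank-$E$ factor of Assumption~\ref{ass:negl_eigenv} and match it against the block partition of $K_{\mathcal{D}\mathcal{D}}$, which yields the identifications $K_{\mathcal{R}\mathcal{R}} = U_{11}\Lambda_\mathcal{R}U_{11}^\top$, $K_{\mathcal{R}\mathcal{X}} = U_{11}\Lambda_\mathcal{R}U_{21}^\top$, $K_{\mathcal{X}\mathcal{R}} = U_{21}\Lambda_\mathcal{R}U_{11}^\top$, and $K_{\mathcal{X}\mathcal{X}} = U_{21}\Lambda_\mathcal{R}U_{21}^\top$. Reading $K_{\mathcal{R}\mathcal{R}} = U_{11}\Lambda_\mathcal{R}U_{11}^\top$ as the eigendecomposition of the representative block gives $K_{\mathcal{R}\mathcal{R}}^{-1} = U_{11}\Lambda_\mathcal{R}^{-1}U_{11}^\top$. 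Substituting this into the standard Nyström product $K_{\mathcal{X}\mathcal{R}}K_{\mathcal{R}\mathcal{R}}^{-1}K_{\mathcal{R}\mathcal{X}}$ lets the inner $U_{11}^\top U_{11}$ factors telescope through $\Lambda_\mathcal{R}\Lambda_\mathcal{R}^{-1}\Lambda_\mathcal{R}$ and collapse the whole expression to $U_{21}\Lambda_\mathcal{R}U_{21}^\top = K_{\mathcal{X}\mathcal{X}}$, which is precisely \eqref{eq: Nystrom kernel}.

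For the second claim, I would abbreviate $C := K_{\mathcal{X}\mathcal{R}}U_{11}$, so that \eqref{eq: Nystrom kernel} reads $K_{\mathcal{X}\mathcal{X}} = C\Lambda_\mathcal{R}^{-1}C^\top$ and, applying the same factorization to the query cross-covariance, $K_{\mathbf{x}\mathcal{X}} = K_{\mathbf{x}\mathcal{R}}U_{11}\Lambda_\mathcal{R}^{-1}C^\top$. The quantity of interest then becomes $K_{\mathbf{x}\mathcal{R}}U_{11}\Lambda_\mathcal{R}^{-1}C^\top(C\Lambda_\mathcal{R}^{-1}C^\top + \sigma_\nu^2 I)^{-1}$. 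The crucial manipulation is the push-through identity $C^\top(C\Lambda_\mathcal{R}^{-1}C^\top + \sigma_\nu^2 I_N)^{-1} = (C^\top C\Lambda_\mathcal{R}^{-1} + \sigma_\nu^2 I_E)^{-1}C^\top$, which trades the $N\times N$ inverse for an $E\times E$ one. I would then factor $C^\top C\Lambda_\mathcal{R}^{-1} + \sigma_\nu^2 I = (C^\top C + \sigma_\nu^2\Lambda_\mathcal{R})\Lambda_\mathcal{R}^{-1}$ so that the leading $\Lambda_\mathcal{R}^{-1}$ cancels against the $\Lambda_\mathcal{R}$ produced by inverting this product, leaving $(C^\top C + \sigma_\nu^2\Lambda_\mathcal{R})^{-1}$. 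Re-expanding $C = K_{\mathcal{X}\mathcal{R}}U_{11}$ inside $C^\top C$ and $C^\top$ then reproduces the stated formula verbatim.

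I expect the main obstacle to be the justification of $K_{\mathcal{R}\mathcal{R}}^{-1} = U_{11}\Lambda_\mathcal{R}^{-1}U_{11}^\top$: this equality needs $U_{11}$ to be orthogonal, a property that is not automatic for the top-left block of the eigenvector matrix of the full $K_{\mathcal{D}\mathcal{D}}$ and must instead be read off the eigendecomposition of the $E\times E$ block $K_{\mathcal{R}\mathcal{R}}$ on its own. The remaining difficulty is bookkeeping: verifying the push-through identity in the correct (non-commuting) order, keeping $\Lambda_\mathcal{R}^{-1}$ on its proper side throughout, and checking that the $\sigma_\nu^2\Lambda_\mathcal{R}$ term lands inside the $E\times E$ inverse exactly as claimed.
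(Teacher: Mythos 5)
The paper never actually proves this theorem: it is imported wholesale from the Nystr\"om literature (\cite{nystrom}), and the only related argument in the paper is the proof of Lemma~\ref{lemma:Nystrom GP}. Your second half is algebraically sound: writing $C=K_{\mathcal{X}\mathcal{R}}U_{11}$, the push-through identity $C^\top\left(C\Lambda_{\mathcal{R}}^{-1}C^\top+\sigma_\nu^2 I_N\right)^{-1}=\left(C^\top C\Lambda_{\mathcal{R}}^{-1}+\sigma_\nu^2 I_E\right)^{-1}C^\top$ together with the factorization $C^\top C\Lambda_{\mathcal{R}}^{-1}+\sigma_\nu^2 I_E=\left(C^\top C+\sigma_\nu^2\Lambda_{\mathcal{R}}\right)\Lambda_{\mathcal{R}}^{-1}$ reproduces the stated inference formula, and this is exactly the mechanism the paper itself uses to prove Lemma~\ref{lemma:Nystrom GP}. (One caveat you should state: factorizing $K_{\mathbf{x}\mathcal{X}}$ requires the low-rank assumption to hold for $\mathcal{D}\cup\{\mathbf{x}\}$, not just $\mathcal{D}$.)

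The genuine gap is in your first half, and it is precisely the obstacle you flag at the end but do not remove: you use two readings of $U_{11}$ that are mutually exclusive. The block identifications $K_{\mathcal{R}\mathcal{X}}=U_{11}\Lambda_{\mathcal{R}}U_{21}^\top$ and $K_{\mathcal{X}\mathcal{X}}=U_{21}\Lambda_{\mathcal{R}}U_{21}^\top$ require $U_{11}$ to be the top block of the eigenvector matrix of $K_{\mathcal{D}\mathcal{D}}$; but then $U_{11}^\top U_{11}+U_{21}^\top U_{21}=I_E$, so the property needed both for $K_{\mathcal{R}\mathcal{R}}^{-1}=U_{11}\Lambda_{\mathcal{R}}^{-1}U_{11}^\top$ and for your telescoping step, namely $U_{11}^\top U_{11}=I_E$, forces $U_{21}=0$ and hence $K_{\mathcal{X}\mathcal{R}}=0$, $K_{\mathcal{X}\mathcal{X}}=0$: your argument closes only in that degenerate case. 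In fact, under the full-decomposition reading the identity \eqref{eq: Nystrom kernel} is false in general: take $E=N=1$ and $K_{\mathcal{D}\mathcal{D}}=\left[\begin{smallmatrix}1&2\\2&4\end{smallmatrix}\right]$, which has rank one so Assumption~\ref{ass:negl_eigenv} holds exactly; then $U_{11}=1/\sqrt{5}$, $\Lambda_{\mathcal{R}}=5$, and the right-hand side of \eqref{eq: Nystrom kernel} equals $4/25\neq 4=K_{\mathcal{X}\mathcal{X}}$. The repair is to read $U_{11},\Lambda_{\mathcal{R}}$ in the theorem as the eigendecomposition of $K_{\mathcal{R}\mathcal{R}}$ alone — which is what Algorithm~\ref{alg:one} actually computes via \texttt{svd}$(K_{\mathcal{R}\mathcal{R}})$ — so that $U_{11}\Lambda_{\mathcal{R}}^{-1}U_{11}^\top=K_{\mathcal{R}\mathcal{R}}^{-1}$, and then to prove $K_{\mathcal{X}\mathcal{X}}=K_{\mathcal{X}\mathcal{R}}K_{\mathcal{R}\mathcal{R}}^{-1}K_{\mathcal{R}\mathcal{X}}$ from Assumption~\ref{ass:negl_eigenv} without any orthogonality of the full-decomposition blocks: denoting those blocks by $V_1,V_2$, invertibility of $K_{\mathcal{R}\mathcal{R}}=V_1\Lambda_{\mathcal{R}} V_1^\top$ makes $V_1$ invertible, and
\begin{equation*}
K_{\mathcal{X}\mathcal{R}}K_{\mathcal{R}\mathcal{R}}^{-1}K_{\mathcal{R}\mathcal{X}}
=V_2\Lambda_{\mathcal{R}} V_1^\top\,V_1^{-\top}\Lambda_{\mathcal{R}}^{-1}V_1^{-1}\,V_1\Lambda_{\mathcal{R}} V_2^\top
=V_2\Lambda_{\mathcal{R}} V_2^\top=K_{\mathcal{X}\mathcal{X}},
\end{equation*}
where the cancellation is $V_1^\top V_1^{-\top}=I$, not $V_1^\top V_1=I$ (equivalently, observe that the Schur complement of a rank-$E$ positive semidefinite matrix with invertible $E\times E$ leading block must vanish). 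With that substitution your two halves combine into a correct proof.
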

This approximation reduces the time complexity of computing $\left(K_{\mathcal{X} \mathcal{X}} + \sigma_\nu^2 I \right)^{-1}$  from $\mathcal{O}(N^3)$ to $\mathcal{O}(E^3)$, thus making inference computationally tractable even for large $N$. 

Approximate inference with the Nyström method in Theorem \ref{theorem:nystrom} is equivalent to a standard GP with the kernel specified in the following lemma.
\begin{lemma}[Nyström induced feature space] 
\label{lemma:Nystrom GP}
Inference with the Nyström method is equivalent to inference with Gaussian Processes with kernel $k(\mathbf{x}, \mathbf{x}') = K_{\mathbf{x} \mathcal{R}} U_{11} \Lambda_{\mathcal{R}}^{-1} U_{11}^\top K_{\mathcal{R} \mathbf{x}'}$ and feature map $\Phi: \mathbb{R}^s \rightarrow \mathbb{R}^E$ 
\begin{equation}
    \Phi(\mathbf{x}) = \Lambda_{\mathcal{R}}^{-\frac{1}{2}} U_{11}^\top K_{\mathcal{R} \mathbf{x} }.
\end{equation}
\end{lemma}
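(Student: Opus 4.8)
The plan is to establish the two assertions in sequence: first that the stated feature map $\Phi$ generates exactly the stated kernel $k$, and second that running the \emph{exact} GP inference \eqref{eq:GP_inference} with $k$ reproduces the Nyström-approximated inference of Theorem~\ref{theorem:nystrom}. The first assertion is the easy algebraic part, so I would dispatch it first. Since $\Lambda_{\mathcal{R}}$ is diagonal, $\Lambda_{\mathcal{R}}^{-1/2}$ is symmetric, and by symmetry of the kernel $K_{\mathcal{R}\mathbf{x}}^\top = K_{\mathbf{x}\mathcal{R}}$. Hence a direct computation gives
\[
    \Phi(\mathbf{x})^\top \Phi(\mathbf{x}') = K_{\mathbf{x}\mathcal{R}} U_{11} \Lambda_{\mathcal{R}}^{-\frac12}\Lambda_{\mathcal{R}}^{-\frac12} U_{11}^\top K_{\mathcal{R}\mathbf{x}'} = K_{\mathbf{x}\mathcal{R}} U_{11} \Lambda_{\mathcal{R}}^{-1} U_{11}^\top K_{\mathcal{R}\mathbf{x}'},
\]
which is precisely $k(\mathbf{x},\mathbf{x}')$. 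By Lemma~\ref{thm: GP to BLR} the GP with this kernel is then equivalent to a BLR on the finite-dimensional features $\Phi$, so everything reduces to matching the two inference procedures.

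For the inference equivalence I would abbreviate $A := U_{11}^\top K_{\mathcal{R}\mathcal{X}}$ and $\Lambda := \Lambda_{\mathcal{R}}$, so that assembling $k$ over the training set yields $K_{\mathcal{X}\mathcal{X}} = A^\top \Lambda^{-1} A$ (matching \eqref{eq: Nystrom kernel}) and $K_{\mathbf{x}\mathcal{X}} = K_{\mathbf{x}\mathcal{R}} U_{11} \Lambda^{-1} A$, while $U_{11}^\top K_{\mathcal{R}\mathcal{X}} K_{\mathcal{X}\mathcal{R}} U_{11} = A A^\top$. Substituting these into the exact expressions \eqref{eq:GP_inference} and comparing with Theorem~\ref{theorem:nystrom}, the entire claim collapses to a single push-through identity
\[
    \Lambda^{-1} A \left(A^\top \Lambda^{-1} A + \sigma_\nu^2 I\right)^{-1} = \left(A A^\top + \sigma_\nu^2 \Lambda\right)^{-1} A.
\]

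I would verify this identity by multiplying both sides on the left by $\left(A A^\top + \sigma_\nu^2 \Lambda\right)$ and using the elementary factorization $\left(A A^\top + \sigma_\nu^2 \Lambda\right)\Lambda^{-1} A = A\left(A^\top \Lambda^{-1} A + \sigma_\nu^2 I\right)$: the left-hand side then collapses to $A$ and the right-hand side collapses to $A$ as well, so the two coefficient matrices coincide. Feeding this into \eqref{eq:GP_mean_inference} reproduces the Nyström mean directly; for the variance \eqref{eq:GP_var_inference} I would apply the same identity to the factor $K_{\mathbf{x}\mathcal{X}}\left(K_{\mathcal{X}\mathcal{X}} + \sigma_\nu^2 I\right)^{-1}$ and then right-multiply by $K_{\mathcal{X}\mathbf{x}}$, noting $K_{\mathbf{x}\mathbf{x}} = K_{\mathbf{x}\mathcal{R}} U_{11}\Lambda^{-1} U_{11}^\top K_{\mathcal{R}\mathbf{x}}$ by definition of $k$. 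The main obstacle is recognizing and verifying this push-through identity, which is the only nontrivial step; it relies on the invertibility of $\Lambda_{\mathcal{R}}$ (guaranteed, since it collects the non-negligible eigenvalues isolated under Assumption~\ref{ass:negl_eigenv}) and of $A A^\top + \sigma_\nu^2 \Lambda$ (guaranteed for $\sigma_\nu^2 > 0$, as $A A^\top$ is positive semidefinite and $\Lambda$ positive definite).
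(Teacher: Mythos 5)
Your proof is correct and follows essentially the same route as the paper: both arguments reduce GP inference with the Nyström-induced kernel to the formula of Theorem~\ref{theorem:nystrom} via a push-through identity, the paper applying the standard push-through to the feature matrix $\Phi(\mathcal{X})$ and then absorbing the $\Lambda_{\mathcal{R}}^{-1/2}$ factors into the $E\times E$ inverse, whereas you fold those two steps into a single $\Lambda_{\mathcal{R}}$-weighted push-through identity that you verify directly. Your extra bookkeeping (the explicit kernel--feature-map check, the invertibility conditions, and the explicit variance treatment) is sound but does not change the substance of the argument.
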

\begin{proof}
    Consider GP mean inference with kernel $k(\mathbf{x}, \mathbf{x}') = \Phi(\mathbf{x})^\top \Phi(\mathbf{x}')$,
    \begin{align*}
        \mu(\mathbf{x}) &= \Phi(\mathbf{x})^\top \Phi(\mathcal{X}) \,\, \left(\Phi(\mathcal{X})^\top \Phi(\mathcal{X}) + \sigma_{\nu}^2 \mathbb{I}\right)^{-1} \,\mathbf{y}.
    \end{align*}
    By applying the \emph{push-through identity} to this expression, we obtain the Nyström approximate inference formula, thus proving the equivalence of the two methods
    \begin{align*}
        \mu(\mathbf{x}) &= \Phi(\mathbf{x})^\top \,\, \left( \Phi(\mathcal{X}) \Phi(\mathcal{X})^\top + \sigma_\nu^2\mathbb{I} \right)^{-1} \Phi(\mathcal {X}) \,\mathbf{y} = \\
        &= K_{\mathbf{x}\mathcal{R}} U_{11} \left( U_{11}^\top K_{\mathcal{R} \mathcal{X}} K_{\mathcal{X} \mathcal{R}} U_{11} + \sigma_\nu^2 \Lambda_{\mathcal{R}} \right)^{-1} U_{11}^\top K_{\mathcal{R} \mathcal{X}} \mathbf{y}.
    \end{align*}
    The proof for the variance is similar.
\end{proof}
In line with Lemma~\ref{thm: GP to BLR}, we see that a Gaussian Process approximated with the Nyström method is equivalent to a BLR on artificial measurements modeled as: 
\begin{equation}\label{eq:Nystrom Phi}
    y_i = \boldsymbol{\theta}^\top \Phi(\mathbf{x}_i) + \nu_i,
\end{equation}
where $\boldsymbol{\theta} \in \mathbb{R}^E$ is the parameter vector.

\subsection{Time-varying function}
Now that we have addressed the first issue, we move our attention to the remaining two. 
As noted in Section \ref{sec: Nystrom approximation}, a time invariant Gaussian Process approximated with the Nyström method corresponds to a BLR in a space of features $\Phi(\mathbf{x})$, that models the function as $g(\mathbf{x}) = \Phi(\mathbf{x})^\top \boldsymbol{\boldsymbol{\theta}}.$

Therefore all the information about the function is stored in a parameter vector $\boldsymbol{\theta}$. In order to learn time-varying functions, $f(\cdot,t):  \mathcal{P} \to \mathbb{R}$, we will assume that the parameter vector $\boldsymbol{\theta}$ is time-varying, that is, 
\[ f(\mathbf{x},t) = \Phi(\mathbf{x})^\top \boldsymbol{\theta}(t).\] 
We assume that $\boldsymbol{\theta}$ is subject to these dynamic constraints:
\begin{equation}
    \label{eq: theta system}
    \boldsymbol{\theta}(t) = \boldsymbol{\theta}(t-1) + \omega(t), \,\,\,\,\, \omega(t) \sim \mathcal{N}(\mathbf{0}, \sigma_\omega^2 I).
\end{equation}
In other words, the parameter vector performs a random walk in the parameter space, with the step length controlled by the parameter $\sigma_\omega$. This hyperparameter, together with the order of the approximation $E$, corresponding to the number of representative points used in Sec. \ref{sec: Nystrom approximation}, are the only two parameters that DistKP requires beyond a standard Gaussian Process. Such a general assumption allows us to determine how smoothly the function varies over time. \newline
The measurements collected by each agent $i$ can now be considered as observations of the state of the system \eqref{eq: theta system}
\begin{equation}
    \label{eq: theta observation}
    y_i(t) = \Phi(\mathbf{x}_i(t))^\top \boldsymbol{\theta}(t) + \nu(t), \,\,\,\,\, \nu(t) \sim \mathcal{N}(0, \sigma_\nu^2).
\end{equation} 

By reformulating the time-varying estimation problem in this manner, it reduces to a state estimation problem for the system \eqref{eq: theta system}, \eqref{eq: theta observation}. For such linear system subject to Gaussian noise, it is known that a Kalman Filter \cite{doi:https://doi.org/10.1002/0470045345.ch5} provides the optimal linear recursive estimation method.
\subsection{Distributed regression}
We are interested in making the agents collaboratively learn a shared time varying Gaussian Process by communicating through the network described in Section \ref{sec: Problem formulation}.
In this scenario, a distributed implementation of the Kalman filter becomes necessary.
For this purpose, we apply the algorithm described in \cite{dist_kalman}, that keeps the local prediction and update step of a traditional Kalman filter and introduces an estimate sharing step. During this step different local estimates of $\boldsymbol{\hat{\theta}}$ are combined by applying a distributed version of classic measurement fusion techniques.

More specifically, consider the $n$ agent's local estimates $\boldsymbol{\hat{\theta}}_i$, organized in a vector
\begin{align*}
\boldsymbol{\hat{\theta}}(t) = \begin{bmatrix} \boldsymbol{\hat{\theta}}_1(t) \\ \boldsymbol{\hat{\theta}}_2(t) \\ \vdots \\ \boldsymbol{\hat{\theta}}_n(t) \end{bmatrix}. 
\end{align*}
The distributed Kalman Filter of \cite{dist_kalman} assumes that each agent has access to the local covariance matrices $P_i$, each referred to $\boldsymbol{\hat{\theta}}_i$,
but do not have access to each other's estimates and covariances directly. Using the communication network, we can employ an averaging algorithm to obtain at each agent: 
\[
    \frac{1}{n} \sum_{i=1}^n P_i^{-1} \quad \textrm{ and } \quad \frac{1}{n} \sum_{i=1}^n P_i^{-1} \boldsymbol{\hat{\theta}}_i.
\]

A convenient way to combine measurements, as shown in \cite{dist_kalman}, consists in neglecting the correlation between different local estimates and thus approximating the Best Linear Unbiased Estimator (BLUE) through inverse variance weighting.
However, averaging leads to highly correlated local estimates and, once the algorithm has converged to a shared covariance estimate $P$, each inverse variance weighting step introduces an error factor of $\frac{1}{n}$
\[
    P_i(t+1) \approx \left( \sum_{j=1}^n P_j(t)^{-1} \right)^{\!-1} = \frac{1}{n} P
\]
We correct this effect by introducing a multiplicative factor $n$ in the covariance formula.
\[
    \boldsymbol{\bar{\theta}} 
    \!=\! \left( \frac{1}{n} \sum_{i=1}^n P_i^{-1} \right)^{\!-1}\! \frac{1}{n} \sum_{i=1}^n P_i^{-1} \boldsymbol{\hat{\theta}}_i, \,\,\,
    \bar{P} = n \left( \sum_{i=1}^n P_i^{-1} \right)^{\!-1}\!\!.
\]
A comparison between the BLUE and our approximation is illustrated in Figure \ref{fig:inv_var_vs_BLUE}, that shows how our Kalman Filter introduces negligible errors. We point out that, without such approximation, it would not be possible to perform state estimation in a distributed fashion.

\begin{figure}
    \centering
    \includegraphics[width=0.4\textwidth]{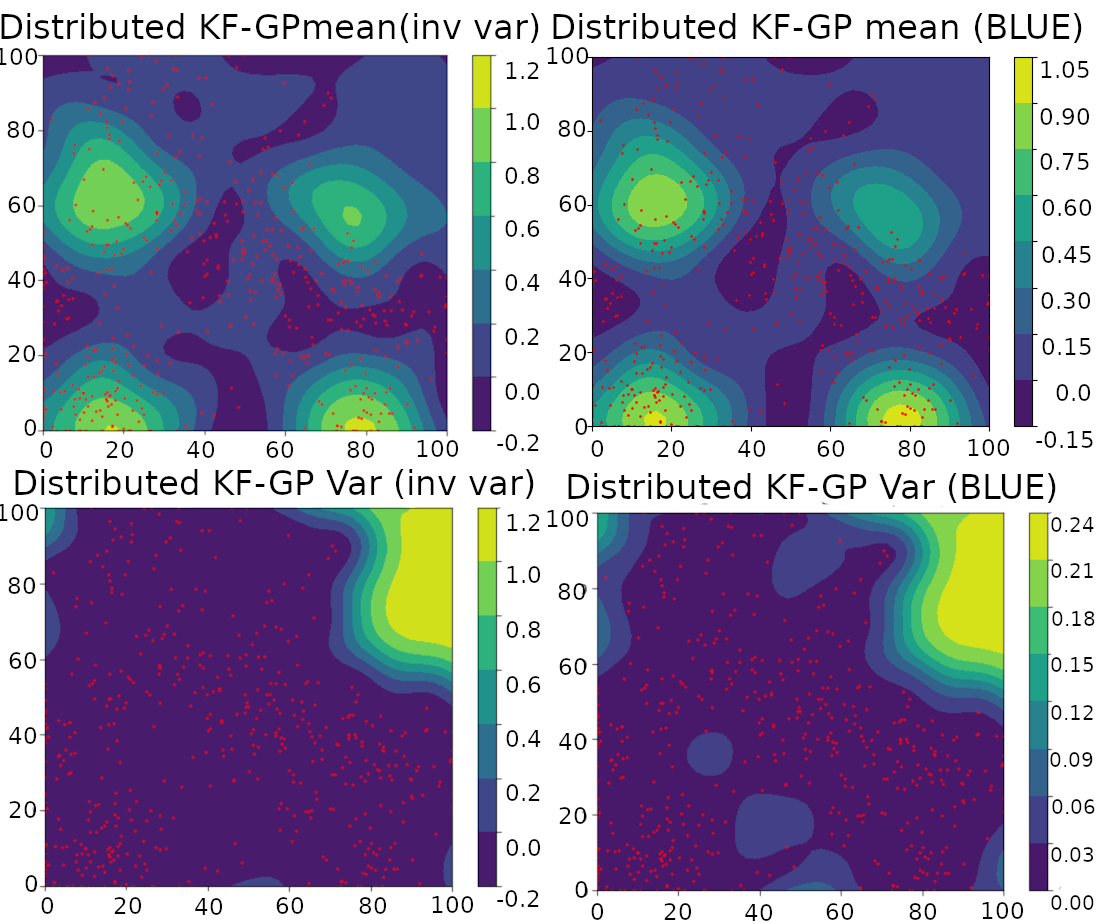}
    \caption{Comparison between measurement fusion with inverse variance weighting (left) and (centrally computed) BLUE (right). Sample locations are represented as red dots. We use $\sigma_\omega = 0$ since higher values make it difficult to compute the BLUE due to ill conditioned matrices. }
    \label{fig:inv_var_vs_BLUE}
\end{figure}

\section{DistKP Algorithm}
\label{sec: problem solution}
Building on the results presented so far, we introduce DistKP, as shown in Algorithm \ref{alg:one} and explained as follows:

\begin{algorithm}
\caption{DistKP for Agent $i$}
\begin{algorithmic}[1] 

    \State $P_i  \gets \sigma_{\text{init}}^2 \cdot I_E$  \Comment{Initialization}
    \State $\theta_i  \gets \texttt{zeros}((E,1))$
    \State $L_i \gets P_i^{-1} \theta_i$
    \State $\widetilde{P}_i \gets P_i^{-1}$
    \State $t \gets 0$
    \State $\mathcal{R} \gets \texttt{get\_repr\_points()}$ \Comment{Returns shared set}
    \State $K_{\mathcal{R} \mathcal{R}} \gets k(\mathcal{R}, \mathcal{R})$
    \State $U, \Lambda_{\mathcal{R}}, V \gets \texttt{svd}(K_{\mathcal{R} \mathcal{R}})$
    \State $\text{SensingPeriod} \gets 1$ \Comment{Choose appropriate value}
    \While{True}
        \Comment{Sensing} 
        \If{mod(t,\text{SensingPeriod}) = 0}
        \State $y_i = \texttt{sample\_function}(x_i)$
        \State $\theta_i \gets  \widetilde{P}_i^{-1}L$
        \State $P_i \gets \widetilde{P}_i^{-1}$
        \State $H \gets k(x_i, \mathcal{R}) \,\, U \Lambda_{\mathcal{R}}^{-\frac{1}{2}}$
        \State $K \gets (P_i + \sigma_\omega^2 I)  H^\top  (H  (P_i+\sigma_\omega^2 I)  H^\top + \sigma_\nu^2 I)^{-1}$
        \State $\theta_i  \gets \theta_i + K (y_i - H \theta_i)$  \Comment {Kalman update}
        \State $P_i = (I_E - K H) (P_i+\sigma_\omega^2 I)$
        \State $L_i \gets P_i^{-1} \theta_i$  \Comment{Initialize Consensus}
        \State $\widetilde{P}_i \gets P_i^{-1}$
        \EndIf
    \State \{ $L^s_j, \widetilde{P}^s_j : j \in \mathcal{N}_i \} = \text{incoming\_estimates()}$ 
    \Comment{Get estimates from neighbours}
    \State $L_i \gets \frac{1}{1 + |\mathcal{N}_i|} \left( L_i + \sum_{j\in\mathcal{N}_i} L^s_j \right)$. 
    \State $\widetilde{P}_i \gets \frac{1}{1 + |\mathcal{N}_i|} \left( \widetilde{P}_i +\sum_{j\in\mathcal{N}_i} \widetilde{P}^s_j \right)$
    \State $t \gets t+1$
    \EndWhile
\end{algorithmic}
\label{alg:one}
\end{algorithm}

Before online operation, we use the communication network to initialize each agent with the same set of representative points. For instance, these can be sampled uniformly at random from the domain $\mathcal{P}$ (as done in the simulations). These are then used to compute the matrices $U_{11}$ and $\Lambda_{\mathcal{R}}$ through singular value decomposition.
Each agent also requires knowledge of the kernel $k(\cdot, \cdot)$, and of the parameters $\sigma_\omega$, that represents the process noise of the function dynamics, and $\sigma_\nu$, the variance of the measurement noise.

During the online operations, each agent runs the remaining steps in which  they continuously measure $f$ and refine the local estimate through communication. 
\begin{enumerate}
    \item At each time step, each agent collects a noisy measurement of the function and then uses \eqref{eq:Nystrom Phi} to obtain the required features. Then a Kalman Filter based on model \eqref{eq: theta system}, \eqref{eq: theta observation} is used to update the local estimates $\boldsymbol{\hat{\theta}}_i(t)$ and $P_i$.
    \item The agents then proceed to do \texttt{SensingPeriod} steps of the consensus algorithm: Each agent computes the vector $P_i^{-1} \boldsymbol{\hat{\theta_i}}$ and communicates it to the neighbours, together with $P_i^{-1}$. Incoming variables are averaged with the local ones and $\boldsymbol{\hat{\theta_i}}$ and $P_i$ are updated. 
    As the averaging algorithm has an exponential convergence rate, precision increases with the number of iterations but, due to the dynamic nature of the function, the agents need to balance the time spent sampling and the number of consecutive consensus steps. This is done by tuning \texttt{SensingPeriod}. 
\end{enumerate}

\section{Numerical Results}
\label{sec: numerical results}
We present simulations to validate the effectiveness of DistKP, where we consider the case of a swarm of UAVs that need to learn the spatial distribution of thermal currents. This setting is strongly inspired by the Skyscanner project \footnote{\label{foot:skyscanner} https://sites.laas.fr/projects/skyscanner/} and by \cite{uav_sampling}. Thermal currents are wind currents occurring due to convection associated with cloud formations. Learning the distribution of these currents, that changes dynamically based on the cloud's position, is important in applications that exploit them to minimize the energy required to maintain altitude.

The unknown function is a simplified model of the wind distribution in a cumulus cloud, of which we consider a 2d horizontal section; We use a color map to render the wind direction (red is up and blue is down) and the wind power (represented by color intensity). The values range between -1 and +1, representing the minimum and maximum upwind value, respectively.
We assume a constant speed for the clouds, and we initialize a swarm of 16 UAVs by distributing them evenly spaced over the region. Each drone moves independently from the others according to a random walk with Gaussian steps. Our aim in this simulation is to demonstrate the adaptability of our learning algorithm. 

Figure \ref{fig:evolution} shows the evolution of the local mean and variance estimates for a single drone of the swarm. We adopt a Laplace kernel, that is well suited for modeling irregular functions like the wind field; this choice is enabled by the use of Nyström features and would not have been possible if we had adopted KL features as proposed in \cite{distributed_informative_planning}, since they cannot be computed in close-form for this kernel. After just 50 iterations (first row), the mean estimate closely resembles the ground truth. The algorithm also estimates the variance for each point of the map (right column), thus giving insights about the confidence of the prediction. 
At time step 600, for example, we observe an incorrect prediction around (10,10): the agent predicts strong down wind in an area with zero turbulence. But, if we look at the variance, we notice high uncertainty in the estimate. The variance can thus be used by a control algorithm to weigh information by its reliability and to guide exploration \cite{distributed_informative_planning}.
In general, the estimate follows the movement of the clouds, maintaining a useful representation for the entire period. 

Additional results can be seen in Figure \ref{fig:comparison}, where our algorithm (left) is compared with a distributed time-varying GP as described in \cite{distributed_informative_planning} (middle). In order to be able to use the Laplace kernel with the latter algorithm, we substituted KL features with Nyström ones in our implementation.
Because of the memory loss problem, the second algorithm estimate loses important information on the wind distribution over time, resulting in a deteriorated estimate.

Additional comparisons can be found in our Github repository\footnote{https://github.com/Nicola-Taddei/DistributedGP}, with all the code used for these simulations.
\begin{figure}
    \centering
    \includegraphics[width=1.0\linewidth]{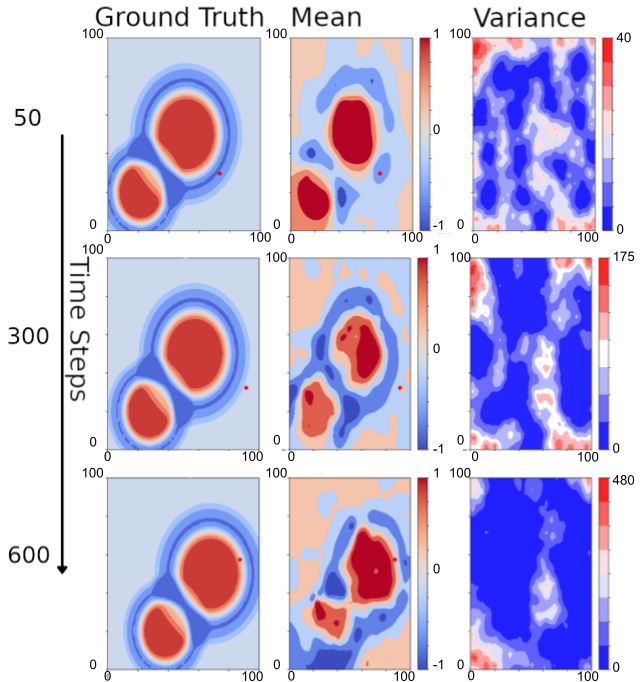}
    \caption{Ground Truth (left), mean (center) and variance (right) estimates at $t=50$ (up), $t=300$ (middle), $t=600$ (bottom).}
    \label{fig:evolution}
\end{figure}
\begin{figure}
    \centering
    \includegraphics[width=1\linewidth]{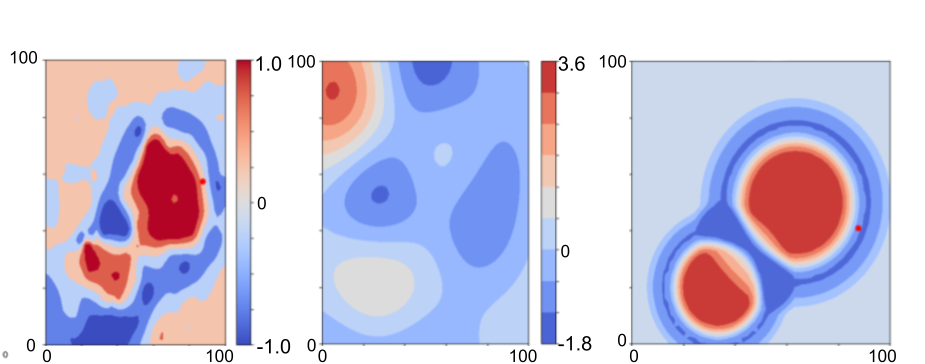}
    \caption{Comparison between our estimate at time $t=600$ (left), the estimate of a distributed GP with time-varying method from \cite{distributed_informative_planning} (center) and the ground truth (right).}
    \label{fig:comparison}
\end{figure}

\section{Conclusions}
In this paper, we introduced a new algorithm, DistKP, that enables distributed Gaussian Process regression in the time-varying setting by recasting it as an estimation problem. DistKP overcomes limitations of previous methods as it uses data more efficiently and relies on weaker assumptions on how the data is correlated across time. The results were validated numerically through extensive experiments on simulations inspired by the Skyscanner project \cite{uav_sampling}.

Future research will focus on providing convergence guarantees and error bounds for the proposed method as well as integrating it in a planning pipeline like in \cite{distributed_informative_planning}. 


\end{document}